\newtheorem{theorem}{Theorem}
\newcommand{\bbR}{{\mathbb R}}
\newcommand{\bbC}{{\mathbb C}}
\def\t{\widetilde}
\title{How one can repair non-integrable Kahan discretizations}
\author{Matteo Petrera, Yuri B. Suris, Ren\'e Zander }
\thanks{E-mail: {\tt  petrera@math.tu-berlin.de, suris@math.tu-berlin.de, zander@math.tu-berlin.de}}
\begin{document}

\maketitle

\begin{center}
{\footnotesize{
Institut f\"ur Mathematik, MA 7-1\\
Technische Universit\"at Berlin, Str. des 17. Juni 136,
10623 Berlin, Germany
}}
\end{center}

\begin{abstract}
Kahan discretization is applicable to any system of ordinary differential equations on $\mathbb R^n$ with a quadratic vector field,
$\dot{x}=f(x)=Q(x)+Bx+c$, and produces a birational map $x\mapsto \widetilde{x}$ according to the formula $(\widetilde{x}-x)/\epsilon=Q(x,\widetilde{x})+B(x+\widetilde{x})/2+c$, where $Q(x,\widetilde{x})$ is the symmetric bilinear form corresponding to the quadratic form $Q$. When applied to integrable systems, Kahan discretization preserves integrability much more frequently than one would expect a priori, however not always. We show that in some cases where the original recipe fails to preserve integrability, one can adjust coefficients of the Kahan discretization to ensure its integrability. 
\end{abstract}

\section{Introduction}

The Kahan discretization was introduced in the unpublished notes \cite{K} as a method applicable to any system of ordinary differential equations on $\bbR^n$ with a quadratic vector field:
\begin{equation}\label{eq: diff eq gen}
\dot{x}=f(x)=Q(x)+Bx+c,
\end{equation}
where each component of $Q:\bbR^n\to\bbR^n$ is a quadratic form, while $B\in{\rm Mat}_{n\times n}(\bbR)$ and $c\in\bbR^n$. Kahan's discretization reads as
\begin{equation}\label{eq: Kahan gen}
\frac{\widetilde{x}-x}{\epsilon}=Q(x,\widetilde{x})+\frac{1}{2}B(x+\widetilde{x})+c,
\end{equation}
where
\[
Q(x,\widetilde{x})=\frac{1}{2}\big(Q(x+\widetilde{x})-Q(x)-Q(\widetilde{x})\big)
\]
is the symmetric bilinear form corresponding to the quadratic form $Q$. Equation (\ref{eq: Kahan gen}) is {\em linear} with respect to $\widetilde x$ and therefore defines a {\em rational} map $\widetilde{x}=\Phi_f(x,\epsilon)$. 
Clearly, this map approximates the time $\epsilon$ shift along the solutions of the original differential system. Since equation (\ref{eq: Kahan gen}) remains invariant under the interchange $x\leftrightarrow\widetilde{x}$ with the simultaneous sign inversion $\epsilon\mapsto-\epsilon$, one has the {\em reversibility} property
\begin{equation}\label{eq: reversible}
\Phi^{-1}_f(x,\epsilon)=\Phi_f(x,-\epsilon).
\end{equation}
In particular, the map $\Phi_f$ is {\em birational}. 
Kahan applied this discretization scheme to the famous Lotka-Volterra system and showed that in this case it possesses a very remarkable non-spiralling property. This property was explained by Sanz-Serna \cite{SS} by demonstrating that in this case the numerical method preserves an invariant Poisson structure of the original system.

The next intriguing appearance of this discretization was in two papers by Hirota and Kimura who (being apparently unaware of the work by Kahan) applied it to two famous {\em integrable} system of classical mechanics, the Euler top and the Lagrange top \cite{HK, KH}. Surprisingly, the discretization scheme produced in both cases {\em integrable} maps. 

Since then, integrability properties of the Kahan's method when applied to integrable systems (also called ``Hirota-Kimura method'' in this context) were extensively studied, mainly by two groups, in Berlin \cite{PS, PPS1, PPS2, PPS3, PS2, PS3, PS4, PZ, PSS, Zander} and in Australia and Norway \cite{CMOQ1, CMOQ2, CMOQ4, CMOQ5, KCMMOQ, KMQ}. It was demonstrated that, in an amazing number of cases, the method preserves integrability in the sense that the map $\Phi_f(x,\epsilon)$ possesses as many independent integrals of motion as the original system $\dot x=f(x)$. It was even conjectured  in \cite{PPS1} that this always would be the case, at least for algebraically integrable systems. However, it became clear soon that this conjecture is not true.

One can arrive at a simple counterexample in dimension $n=2$ as follows. In \cite{PPS2, PZ, CMOQ4} three classes of homogeneous quadratic Hamiltonian vector fields were considered, for which Kahan discretization is integrable, namely 
\begin{equation} \label{nahm intro}
\begin{pmatrix} \dot x\\ \dot y \end{pmatrix}
 =\frac{1}{c(x,y)}\begin{pmatrix} \partial H/\partial y  \\ -\partial H/\partial x \end{pmatrix},
\end{equation}
where
\begin{equation*}
H(x,y)=\ell_1^{\gamma_1}(x,y)\ell_2^{\gamma_2}(x,y)\ell_3^{\gamma_3}(x,y), \quad c(x,y)=\ell_1^{\gamma_1-1}(x,y)\ell_2^{\gamma_2-1}(x,y)\ell_3^{\gamma_3-1}(x,y),
\end{equation*}
with $\gamma_1, \gamma_2, \gamma_3\in\bbR\setminus\{0\}$, and $\ell_i(x,y)=a_ix+b_iy$
are linear forms. Integrability takes place for $(\gamma_1,\gamma_2,\gamma_3)=(1,1,1)$, $(1,1,2)$, and $(1,2,3)$. In all three cases, all integral curves of the system \eqref{nahm intro} are of genus one, and the same holds true for all irreducible invariant curves of Kahan's discretization. 

If $(\gamma_1,\gamma_2,\gamma_3)=(1,1,1)$, one is dealing with a homogeneous cubic Hamiltonian. As discovered in \cite{CMOQ1}, Kahan's discretization remains integrable for arbitrary (i.e., also for non-homogeneous) cubic Hamiltonians. 

Consider the case $(\gamma_1,\gamma_2,\gamma_3)=(1,1,2)$. By a linear projective change of coordinates $(x,y)\sim [x:y:1]\in \bbC P^2$, we can arrange $\ell_1(x,y)=y+\frac{1}{2}x$, $\ell_2(x,y)=y-\frac{1}{2}x$, $\ell_3(x,y)=x$. Thus, differential equations correspond to $c(x,y)=x$, $H(x,y)=x^2(y^2-\frac{1}{4}x^2)$, 
\begin{equation} \label{112 intro}
\renewcommand{\arraystretch}{1.2}
\left\{\begin{array}{l} \dot x = 2xy, \\ 
 \dot y = x^2-2y^2.
\end{array} \right.
\end{equation} 
Kahan's discretization of this system reads:
\begin{equation} \label{d 112 intro}
\renewcommand{\arraystretch}{1.2}
\left\{\begin{array}{l}
 \t x - x = \epsilon(\t x y+x\t y), \\ 
 \t y - y = \epsilon(x\t x-2y\t y),
\end{array} \right .
\end{equation}
and results in the following birational map: 
$$
\t x= \frac{x(1+3\epsilon y)}{1+\epsilon y-2\epsilon^2 y^2-\epsilon^2 x^2}, \quad 
\t y=\frac{y-\epsilon y^2+\epsilon x^2}{1+\epsilon y-2\epsilon^2 y^2-\epsilon^2 x^2}.
$$
This map is integrable and possesses the following integral of motion:
$$
H(x,y;\epsilon)=\frac{x^2(y^2-\frac{1}{4}x^2)}{\big(1+\epsilon(y+x)\big)\big(1+\epsilon(y-x)\big)\big(1-\epsilon(y+x)\big)\big(1-\epsilon(y-x)\big)}.
$$
All level sets of the integral are quartic curves with two double points at $(0,\pm 1/\epsilon)$, and the irreducible ones have genus 1. There are three reducible level sets: the one corresponding to vanishing of the numerator of $H(x,y;\epsilon)$ (consisting of four lines), the one corresponding to vanishing of the denominator of $H(x,y;\epsilon)$ (consisting of other four lines), as well as the following one:
$$
(1-\epsilon^2y^2)(1-\epsilon^2(2x^2+y^2))=0,
$$
consisting of two lines and an irreducible conic.

One can now attempt to generalize this construction for a non-homogeneous case,  say for $H(x,y)=x^2(y^2-\frac{1}{4}x^2-\frac{1}{2}b)$. Then differential equations \eqref{nahm intro} read
\begin{equation} \label{112 pert intro}
\renewcommand{\arraystretch}{1.2}
\left\{\begin{array}{l} 
 \dot x = 2xy, \\ 
 \dot y = b+x^2-2y^2,
\end{array} \right.
\end{equation} 
and still have the above mentioned property: all integral curves are of genus 1. However, Kahan's discretization of this system,
\begin{equation} \label{d 112 pert intro}
\renewcommand{\arraystretch}{1.2}
\left\{\begin{array}{l}
 (\t x - x)/\epsilon = \t x y+x\t y, \\ 
 (\t y - y)/\epsilon = b+x\t x -2y\t y,
\end{array} \right .
\end{equation}
is non-integrable. This can be shown by means of the singularity confinement criterium, by observing that for all three indeterminacy points of $\Phi_f^{-1}$, the orbits never land at an indeterminacy point of $\Phi_f$. Equivalently, the dynamical degree of $\Phi_f$ equals 2, that is, its algebraic entropy equals $\log 2$  (cf. \cite{Viallet, Diller}). All these statements are not that easy to prove rigorously, but the numeric evidence is very convincing. Thus, the map $\Phi_f: (x,y)\mapsto (\t x,\t y)$ defined by \eqref{d 112 pert intro} is a counterexample to integrability of Kahan discretizations for algebraically completely integrable quadratic vector fields. 

The goal of this note is to demonstrate how this can be remedied.

\section{First example}

\begin{theorem}
The Kahan-type map given by 
\begin{equation} \label{map 1}
\renewcommand{\arraystretch}{1.2}
\left\{\begin{array}{l}
 (\t x - x)/  \epsilon=\t x y+x\t y, \\ 
 (\t y - y)/ \epsilon= b+x\t x -(2-\epsilon^2b) y\t y,
\end{array} \right .
\end{equation}
is integrable, with an integral of motion
\begin{equation} \label{integral map 1}
H(x,y;\epsilon)=\frac{x^2\big((1-\frac{1}{2}\epsilon^2b)y^2-\frac{1}{4}(1-\epsilon^2b)x^2-\frac{1}{2}b\big)}{\big(1+\epsilon(y+x)\big)\big(1+\epsilon(y-x)\big)\big(1-\epsilon(y+x)\big)\big(1-\epsilon(y-x)\big)}.
\end{equation}
\end{theorem}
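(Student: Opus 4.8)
The plan is to make the implicitly defined map explicit and then verify $H(\t x,\t y;\epsilon)=H(x,y;\epsilon)$ directly. Equations \eqref{map 1} are linear in $(\t x,\t y)$; collecting the unknowns on the left, they read
\[
\begin{pmatrix} 1-\epsilon y & -\epsilon x \\ -\epsilon x & 1+\epsilon(2-\epsilon^2 b)y\end{pmatrix}\begin{pmatrix}\t x\\ \t y\end{pmatrix}=\begin{pmatrix} x\\ y+\epsilon b\end{pmatrix}.
\]
Cramer's rule yields $\t x=P_1/\Delta$, $\t y=P_2/\Delta$ with the common denominator
\[
\Delta=1+\epsilon(1-\epsilon^2 b)y-\epsilon^2(2-\epsilon^2 b)y^2-\epsilon^2 x^2
\]
and numerators $P_1=x\big(1+\epsilon^2 b+\epsilon(3-\epsilon^2 b)y\big)$, $P_2=(1-\epsilon^2 b)y+\epsilon b+\epsilon(x^2-y^2)$. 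Setting $b=0$ recovers the map \eqref{d 112 intro} and its known integral, a useful consistency check to run first.

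Next I would reduce the invariance to a single polynomial identity. Writing $N$ and $D$ for the numerator and denominator of \eqref{integral map 1}, I split them into homogeneous parts: $N=N_4+N_2$ with $N_4=(1-\tfrac12\epsilon^2 b)x^2y^2-\tfrac14(1-\epsilon^2 b)x^4$ and $N_2=-\tfrac12 b x^2$, and, using $D=(1-\epsilon^2(x+y)^2)(1-\epsilon^2(x-y)^2)$, write $D=1+D_2+D_4$ with $D_2=-2\epsilon^2(x^2+y^2)$ and $D_4=\epsilon^4(x^2-y^2)^2$. Substituting $\t x=P_1/\Delta$, $\t y=P_2/\Delta$ and using homogeneity to extract the powers of $\Delta$ gives
\[
H(\t x,\t y;\epsilon)=\frac{N_4(P_1,P_2)+\Delta^2 N_2(P_1,P_2)}{\Delta^4+\Delta^2 D_2(P_1,P_2)+D_4(P_1,P_2)}.
\]
Hence the claim is equivalent to the polynomial identity
\[
\big[N_4(P_1,P_2)+\Delta^2 N_2(P_1,P_2)\big]\,D(x,y)=N(x,y)\,\big[\Delta^4+\Delta^2 D_2(P_1,P_2)+D_4(P_1,P_2)\big],
\]
both sides being polynomials in $(x,y)$ of degree $12$ with parameters $\epsilon,b$, which can be checked by expansion (most comfortably with a computer algebra system).

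Conceptually, I expect this to hold because the reducible level sets $\{N=0\}$ (the double line $x=0$ together with the conic $(1-\tfrac12\epsilon^2 b)y^2-\tfrac14(1-\epsilon^2 b)x^2-\tfrac12 b=0$) and $\{D=0\}$ (the four lines $1\pm\epsilon(y\pm x)=0$) are invariant, so that $N\circ\Phi_f=\mu\,N$ and $D\circ\Phi_f=\mu\,D$ for one and the same rational multiplier $\mu$; the content of the identity above is precisely the equality of these two multipliers. In practice this suggests trimming the bookkeeping by first showing that $N$ divides $N_4(P_1,P_2)+\Delta^2 N_2(P_1,P_2)$ and that $D$ divides $\Delta^4+\Delta^2 D_2(P_1,P_2)+D_4(P_1,P_2)$, with the same quotient $\Delta^4\mu$, and then comparing the two quotients.

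The main obstacle is entirely computational rather than conceptual. Because $N$ and $D$ are inhomogeneous, their degree-$2$ and degree-$4$ parts scale with different powers of $\Delta$, so the cross-multiplied identity does not decouple by degree and must be confirmed as one degree-$12$ identity; keeping the $\epsilon$- and $b$-dependent coefficients straight (or delegating the expansion to a CAS), together with pinning down the common multiplier $\mu$, is the only real difficulty.
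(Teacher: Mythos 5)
Your proposal is correct, but it takes a genuinely different route from the paper's proof. You make the map explicit via Cramer's rule (your $\Delta$, $P_1$, $P_2$ are all correct, and the $b=0$ consistency check does reproduce \eqref{d 112 intro}), split numerator and denominator of \eqref{integral map 1} into homogeneous parts, and reduce the invariance $H\circ\Phi_f=H$ to a single degree-$12$ polynomial identity in $(x,y)$ with parameters $\epsilon,b$ --- a finite, mechanical verification whose only remaining step is the expansion you propose to delegate to a CAS. The paper argues structurally instead: it starts from the symmetric QRT root $\widetilde u=v$, $\widetilde v=(\alpha uv-1)/(u-\alpha v)$ with the known integral \eqref{integral QRT root 1}, applies the projective change of variables $u=(1+y)/x$, $v=(1-y)/x$, and then scales $x\mapsto\epsilon x$, $y\mapsto\epsilon y$ with $1-2\alpha=\epsilon^2 b$, so that \eqref{map 1} is exhibited as a conjugate of the QRT root and \eqref{integral map 1} is simply $K(u,v)$ rewritten in the new coordinates. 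The trade-off is clear: your verification is self-contained (no QRT input) and directly certifies the stated integral, but it conceals where the adjusted coefficient $2-\epsilon^2 b$ comes from and is computationally heavy, whereas the paper's conjugation explains the provenance of both the map and its integral (and why the coefficients must carry this particular $\epsilon$-dependence), reducing the verification to short substitutions at the cost of invoking the known QRT integral from \cite{QRT, Dui}. Your closing remark --- that $\{N=0\}$ and $\{D=0\}$ are invariant divisors with one and the same multiplier, so the identity expresses equality of multipliers --- is exactly the right geometric picture and is in fact what the underlying QRT (elliptic fibration) structure in the paper's proof guarantees; pursuing the divisibility statements you sketch would be a sensible way to organize the expansion by hand.
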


\begin{proof}
Consider the following map (a symmetric QRT root, cf. \cite{QRT, Dui}):
\begin{equation}\label{QRT root 1}
\widetilde{u}=v, \quad \widetilde{v}=\frac{\alpha uv-1}{u-\alpha v}.
\end{equation}
It is a birational map of $\bbC P^2$ (with non-homogeneous coordinates $(u,v)$ on the affine part $\bbC^2\subset\bbC P^2$), admitting an integral of motion
\begin{equation}\label{integral QRT root 1}
K(u,v)=\frac{\alpha(u^2+v^2-1)-uv}{(u^2-1)(v^2-1)}.
\end{equation}
We perform a linear projective change of variables in $\bbC P^2$, given in the non-homogeneous coordinates by
\begin{equation}\label{change 1}
u=\frac{1+y}{x}, \quad v=\frac{1-y}{x}.
\end{equation}
\smallskip

\noindent
Then the first equation of motion in \eqref{QRT root 1}, $\t u=v$, turns into
$$
\t x-x=\t x y+x \t y.
$$
The second equation of motion in \eqref{QRT root 1} can be re-written as a bilinear relation
$$
\frac{1}{2}(\t u v-u \t v)=\frac{1}{2}(\t u v+u \t v)+1-\alpha(u\t u+v\t v).
$$
Upon substitution \eqref{change 1}, this turns into
$$
\t y-y= (1-y \t y)+x \t x-2\alpha(1+y\t  y).
$$
So, we come to the system
$$
\renewcommand{\arraystretch}{1.2}
\left\{\begin{array}{l} 
\t x-x=x \t y+\t xy, \\ 
\t y-y=(1-2\alpha)+x\t x-(1+2\alpha)y\t y.
\end{array} \right.
$$
Scale $x\mapsto \epsilon x$, $y\mapsto \epsilon y$, and set $1-2\alpha=\epsilon^2 b$, so that $\alpha=(1-\epsilon^2 b)/2$ and \linebreak $1+2\alpha=2-\epsilon^2 b$. Then we arrive at the Kahan-type system \eqref{map 1}. Integral \eqref{integral map 1} is nothing but \eqref{integral QRT root 1} in the new coordinates.
\end{proof}

\section{Second example}

We can extend results of the previous section by adding one more inhomogeneous term in the Hamiltonian: $H(x,y)=x^2(y^2-\frac{1}{4}x^2-\frac{2}{3}cx-\frac{1}{2} b)$. Then system \eqref{nahm intro} reads:
\begin{equation} \label{112 pert 2}
\renewcommand{\arraystretch}{1.2}
\left\{\begin{array}{l} 
 \dot x = 2xy, \\ 
 \dot y = b+2cx+x^2-2y^2.
\end{array} \right.
\end{equation} 
This system still has the above mentioned property: all integral curves are of genus 1. Kahan's discretization of this system,
\begin{equation} \label{d 112 pert 2}
\renewcommand{\arraystretch}{1.2}
\left\{\begin{array}{l}
 (\t x - x)/\epsilon = \t x y+x\t y, \\ 
 (\t y - y)/ \epsilon= b+c(x+\t x)+x\t x-2y\t y,
\end{array} \right .
\end{equation}
is non-integrable, like in the previous case $c=0$. However, it can be repaired, as follows.
\begin{theorem}
The Kahan-type map given by 
\begin{equation}\label{map 2}
\renewcommand{\arraystretch}{1.2}
\left\{\begin{array}{l}
(\t x - x)/ \epsilon=x\t y+\t x y , \\ 
(\t y-y )/ \epsilon= b+c(1-\epsilon^2b)(x+\t x)+\big(1-\epsilon^2c^2(2-\epsilon^2b)\big)x \t x -(2-\epsilon^2b)y \t y,
\end{array}\right.
\end{equation}is integrable, with an integral of motion
\begin{equation} \label{integral map 2}
H(x,y;\epsilon)=\frac{x^2\big((1-\frac{1}{2}\epsilon^2b)y^2-\frac{1}{4}a_1x^2-\frac{2}{3}c_1x-\frac{1}{2}b\big)}{m_1(x,y)m_2(x,y)m_3(x,y)m_4(x,y)}.
\end{equation}
where
$$
a_1=1-\epsilon^2b-\tfrac{4}{3}\epsilon^2c^2p,\quad c_1=cp,\quad p=\frac{(1-\epsilon^2 b)(1-\frac{1}{2}\epsilon^2 b)}{1-\frac{1}{3}\epsilon^2b},
$$
\begin{eqnarray*}
m_1(x,y) & = & 1+\epsilon y+\epsilon(1-\epsilon c)x, \\
m_2(x,y) & = & 1+\epsilon y-\epsilon(1+\epsilon c)x, \\
m_3(x,y) & = & 1-\epsilon y+\epsilon(1-\epsilon c)x, \\
m_4(x,y) & = & 1-\epsilon y-\epsilon(1+\epsilon c)x.
\end{eqnarray*}
\end{theorem}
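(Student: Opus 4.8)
The plan is to reproduce the mechanism of Theorem~1: conjugate the map \eqref{map 2} to an integrable \emph{bilinear} (Kahan-type) QRT root by the same projective change of coordinates \eqref{change 1}, and then read off \eqref{integral map 2} as the transform of the QRT invariant. Since the first equation of \eqref{map 2} is identical to that of \eqref{map 1}, the relation $\t u=v$ — the first component of any \emph{symmetric} QRT root — will again reproduce it, so only the second equation of \eqref{map 2} requires genuinely new input. To locate the correct QRT root I would first reverse-engineer it from \eqref{integral map 2}. Pulling the denominator $m_1m_2m_3m_4$ back through \eqref{change 1} (after the scaling $x\mapsto\epsilon x$, $y\mapsto\epsilon y$) shows that the four base lines sit at $u=\epsilon c\pm1$ and $v=\epsilon c\pm1$, so the invariant pencil has denominator
$$
D(u,v)=\big((u-\epsilon c)^2-1\big)\big((v-\epsilon c)^2-1\big).
$$
Pulling back the numerator of \eqref{integral map 2} the same way, one finds it proportional to
$$
N(u,v)=\alpha(u^2+v^2)-uv-\lambda(u+v)-\mu,\qquad \alpha=\tfrac12(1-\epsilon^2b),
$$
for suitable $\lambda,\mu$; thus the quadratic part is exactly the one from \eqref{integral QRT root 1}, but the base lines are shifted by $\epsilon c$ and a genuine linear term $-\lambda(u+v)$ is switched on.

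With $N$ and $D$ in hand the QRT root is forced. By symmetry of the biquadratic $N-KD$, the point $(v,u)$ lies on the same level curve as $(u,v)$, so $\t u=v$, while $\t v$ is the \emph{second} root of $N(v,w)-K(u,v)D(v,w)=0$ viewed as a quadratic in $w$. Eliminating $K(u,v)=N(u,v)/D(u,v)$ by Vieta's formulas and dropping the factor $(v-\epsilon c)^2-1$ that cancels, I obtain
$$
\t v=\frac{Q\,(v+\lambda-\alpha u)+N\,(u-2\epsilon c)}{\alpha Q-N},\qquad Q=(u-\epsilon c)^2-1.
$$
I would then impose that this $\t v$ be \emph{bilinear} in $(u,v)$, which is precisely what makes the conjugated system Kahan-type; substituting \eqref{change 1} together with the scaling should then yield \eqref{map 2}, while $N/D$ becomes \eqref{integral map 2}.

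The main obstacle is exactly this collapse to a bilinear map. In the case of Theorem~1 ($c=0$, whence $\lambda=0$, $\mu=\alpha$, and the shift vanishes) numerator and denominator above share the factor $v$, and $\t v$ reduces automatically to the bilinear expression $(\alpha uv-1)/(u-\alpha v)$ of \eqref{QRT root 1}. For $\epsilon c\neq0$ the a priori cubic-over-quadratic ratio no longer simplifies for generic $\lambda,\mu$, and demanding the requisite cancellation imposes nonlinear constraints that pin down $\lambda$ and $\mu$. Carrying this out is where the nonpolynomial combination
$$
p=\frac{(1-\epsilon^2 b)\,(1-\tfrac12\epsilon^2 b)}{1-\tfrac13\epsilon^2 b}
$$
must emerge, giving $\lambda=\tfrac23\epsilon c\,p$ and $\mu=\tfrac12 a_1$, and hence the coefficients $a_1,c_1$ of \eqref{integral map 2}. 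As an independent check, and as a fallback proof that avoids the QRT bookkeeping altogether, I would verify directly that $H(\t x,\t y;\epsilon)=H(x,y;\epsilon)$ by substituting \eqref{map 2} into \eqref{integral map 2} and clearing denominators; this reduces to a single polynomial identity in $x,y,\epsilon,b,c$ which, though lengthy, is entirely mechanical.
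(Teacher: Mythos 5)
Your proposal is correct and takes essentially the same route as the paper: the paper conjugates the general degree-2 symmetric QRT root \eqref{QRT root 2}, with integral \eqref{integral QRT root 2}, to the map \eqref{map 2} via the modified change \eqref{change 2} with $\beta=-\epsilon c$, which is exactly your construction after the translation $u\mapsto u+\epsilon c$, $v\mapsto v+\epsilon c$ — your shifted base lines $u,v=\epsilon c\pm 1$ become the standard $u,v=\pm 1$, and your reverse-engineered values check out, since the translated integral \eqref{integral QRT root 2} has linear and constant terms $\lambda=-\alpha\beta(2\alpha+1)/(\alpha+1)$ and $\mu=\alpha+\beta\lambda$, which with $\alpha=\tfrac12(1-\epsilon^2b)$, $\beta=-\epsilon c$ give precisely your $\lambda=\tfrac23\epsilon c\,p$ and $\mu=\tfrac12 a_1$. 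The only real difference is bookkeeping: the paper absorbs the $\epsilon c$-shift into the coordinate change and quotes the known two-parameter family of degree-2 symmetric QRT roots outright, so it never needs your Vieta-plus-bilinearity pinning-down step (which, as you note, can in any case be bypassed by the mechanical direct verification that $H$ is preserved).
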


\begin{proof}
The most general symmetric QRT root which is a birational map of $\bbC P^2$ of $\deg=2$, reads:
\begin{equation}\label{QRT root 2}
\t u=v, \quad \t v=\frac{\alpha uv+\beta u-1}{u-\alpha v-\beta}.
\end{equation}
It admits an integral of motion
\begin{equation}\label{integral QRT root 2}
K(u,v)=\frac{\alpha(\alpha+1)(u^2+v^2-1)-(\alpha+1)uv+\beta(u+v)-\beta^2}{(u^2-1)(v^2-1)}.
\end{equation}
We perform a linear projective change of variables in $\bbC P^2$, given in the non-homogeneous coordinates by
\begin{equation}\label{change 2}
u=\frac{1+\beta x+y}{x}, \quad v=\frac{1+\beta x-y}{x}.
\end{equation}
\smallskip

\noindent
To transform equations of motion \eqref{QRT root 2} into new coordinates, it is useful to re-write the second one as a bilinear relation
$$
1+u \t v-\alpha u \t u-\alpha v\t v-\beta u-\beta\t v=0.
$$
Upon substitution \eqref{change 2} and some straightforward simplifications, we come to the following system:
\begin{equation*}
\renewcommand{\arraystretch}{1.2}
\left\{\begin{array}{l}
\t x-x=x\t y+\t xy, \\
\t y -y=(1-2\alpha)-2\alpha\beta(x+\t x)+\big(1-\beta^2(1+2\alpha)\big)x\t x-(1+2\alpha)y\t y.
\end{array}\right.
\end{equation*}
It remains to introduce a small parameter $\epsilon$ to make the above map to a discretization of a vector field. To this end, scale $x\mapsto \epsilon x$, $y\mapsto \epsilon y$, and set $1-2\alpha=\epsilon^2 b$, so that $1+2\alpha=2-\epsilon^2 b$, and $\beta=-\epsilon c$. Then we arrive at the Kahan-type system \eqref{map 2}. Integral \eqref{integral map 2} is the function \eqref{integral QRT root 2} expressed in the new coordinates.
\end{proof}

\section{Conclusions}

The main message of our examples is the following. The definition of Kahan's discretization, as used up to now, includes a very straightforward dependence on the small stepsize $\epsilon$. Namely, it only appears in the denominator of the differences $(\t x-x)/\epsilon$ which approximate the derivatives $\dot x$, compare \eqref{eq: diff eq gen} and \eqref{eq: Kahan gen}. On the contrary, coefficients of the bilinear expressions on the right hand sides of \eqref{eq: Kahan gen} are traditionally taken to literally coincide with the coefficients of the quadratic vector fields on the right hand sides of \eqref{eq: diff eq gen}. Even this straightforward discretization preserves integrability much more frequently than one would expect a priori. However, not always. What we show in this note, is the possibility to adjust the coefficients of the Kahan-type discretizations to ensure their integrability in some cases where the straightforward recipe fails to preserve integrability. It will be an important and entertaining task to find such integrable adjustments for other cases of non-integrability of the straightforward Kahan discretization. 

This research is supported by the DFG Collaborative Research Center TRR 109 ``Discretization in Geometry and Dynamics''.

{}
\end{document}